\newcommand\pref[1]{\prettyref{#1}}
\newtheorem{pro}{Proposition}
\newtheorem{lem}[pro]{Lemma}
\newtheorem{prerem}[pro]{Remark}
\newtheorem{predefin}[pro]{Definition}
\newenvironment{dfn}{\begin{predefin}\rm}{\hfill $\blacktriangle$\end{predefin}}
\newenvironment{probc}[3]{\vskip 5pt
	\noindent {\bf #1}\\[3pt]
	\hspace*{-6pt}\begin{tabular}{p{60pt}l}
		{INSTANCE:}& \parbox[t]{12cm}{#2}\\[15pt]
		QUERY:    & \parbox[t]{12cm}{#3}}{\end{tabular}\\[5pt]}
\newcommand{\isdef}{:=}
\begin{document}

\title{\textbf{Multi-way sparsest cut problem on trees with a control on the number of parts and outliers
}}
\author{
	Ramin Javadi%
\thanks{Corresponding author, Department of Mathematical Sciences,
		Isfahan University of Technology,
		P.O. Box: 84156-83111, Isfahan, Iran. School of Mathematics, Institute for Research in Fundamental Sciences (IPM), P.O. Box: 19395-5746,
		Tehran, Iran.  Email Address: \href{mailto:rjavadi@cc.iut.ac.ir}{rjavadi@cc.iut.ac.ir}.}
\thanks{This research was in part supported by a grant from IPM (No. ...).}
\and 
	Saleh Ashkboos%
	\thanks{Department of Computer Engineering,
		Isfahan University of Technology,
		P.O. Box: 84156-83111, Isfahan, Iran. Email Address: \href{mailto:s.ashkboos@ec.iut.ac.ir}{s.ashkboos@ec.iut.ac.ir}.} 
}
\date{}
\maketitle
\begin{abstract}
Given a graph, the sparsest cut problem asks for a subset of vertices whose edge expansion (the normalized cut given by the subset) is minimized. In this paper, we study a generalization of this problem seeking for $ k $ disjoint subsets of vertices (clusters) whose all edge expansions are small and furthermore, the number of vertices remained in the exterior of the subsets (outliers) is also small. We prove that although this problem is $ NP-$hard for trees, it can be solved in polynomial time for all weighted trees, provided that we restrict the search space to subsets which induce connected subgraphs. The proposed algorithm is based on dynamic programming and runs in the worst case in $ O(k^2 n^3) $, when $ n $ is the number of vertices and $ k $ is the number of clusters. It also runs in linear time when the number of clusters and the number of outliers is bounded by a constant.  
\\
\begin{itemize}
\item[]{{\footnotesize {\bf Key words:}\ sparsest cut problem, isoperimetric number, Cheeger constant, normalized cut, graph partitioning, computational complexity, weighted trees.}}
\item[]{ {\footnotesize {\bf Subject classification:} 05C85, 68Q25, 68R10.}}
\end{itemize}
\end{abstract}
\section{Introduction}

{\large D}ata clustering is definitely among the main topics of 
modern computer science with an indispensable role in data mining, 
image and signal processing, network and data analysis, and data summarization 
(e.g. see \cite{DataClusterReview} and references therein). Considering the current status of data science, one may name some fundamental challenges in this field, among many others,  as follows:
\begin{itemize}
	\item Clustering huge and usually high-dimensional data.
	\item Clustering in presence of outliers and anomalies.
	\item Clustering non-geometric (usually non-Euclidean) data.
	\item Clustering with no prior information about the number of clusters or other features of data (as model of the source etc.).
\end{itemize}

Needless to say, in each case, efficiency and time-complexity of the proposed algorithms are global parameters with a decisive role in applicability.

The subject of this article falls into the setup of  clustering in an unsupervised and static graph-based data presentation. It is instructive to note that the graph-based approach essentially provides data presentation in a very general (not necessarily Euclidean) setting in terms of similarity kernels. In this respect, one of the main well-studied criteria is the ``\textit{sparsest cut problem}'' which apart from tremendous real-world applications in the context of spectral clustering (see e.g. \cite{shi2000normalized,ng2002spectral}), has played a crucial role in the  development of many subjects in theoretical computer science (see e.g. \cite{VaziraniBook,ChungBook}).

Our main objective in this article is to improve this approach, which is essentially based on solving a suitable subpartitioning problem on a corresponding minimum spanning tree, by providing an algorithm that not only gives rise to a fast clustering procedure, but also provides good control on determining the number of clusters and outliers. The procedure is based on a dynamic programming which runs in the worst case in $ O(k^2 n^3) $, where $ n $ is the data size and $ k $ is the number of clusters. Also, the algorithm runs in linear time in terms of the data size when the number of clusters $ k $ and the upper bound on the number of outliers are both constant (which is the case in the most prevalent applications). To the best of our knowledge, the partitioning problem solved by the proposed algorithm (\pref{alg:main}) is among the most challenging problems in this literature which is efficiently solvable, while we will also dwell on some important consequences in what follows.

\subsection{A formal setup and the main result}

Partitioning problems are essentially as old as graph theory itself, with wide applications in science and technology. In particular, one may refer to the unnormalized 
partitioning problems that usually are considered as different versions of minimum cut problems as well as the normalized versions which are more plausible in real applications, however, are much harder to resolve. One of the main problems in the category of normalized cut criteria is the \textit{sparsest cut problem} which is defined as follows. Given a graph $ G $, the sparsest cut problem asks for a cut (a subset of vertices) which has the minimum edge expansion, i.e. 
\begin{equation} \label{eq:cheeger}
\phi(G)\isdef \min_{S\subsetneq V(G), S\neq \emptyset} \max\left\{\frac{|\partial S|}{|S|}, \frac{|\partial S|}{|\overline{S}|}\right\},
\end{equation}
where $ \overline{S}\isdef V(G)\setminus S $ and $ \partial S $ is the set of all edges with exactly one end in $ S $. The sparsest cut problem is known to be an $NP-$hard problem on general graphs \cite{shi2000normalized,mohar}. 
Efforts to find an efficient algorithm for a good approximation of this problem have triggered off the development of many subfields of computer science and have had a significant influence on algorithm design and complexity theory. 
It is amazing to see that recent advances in computer science have given rise to a culmination of ideas not only from the classical graph theoretic point of view but also from the more geometric point of view discussed in the 
theory of Riemannian manifolds and stochastic processes \cite{RiemannianManifold}. 
Up to now, the best known approximation result for the sparsest cut problem is due to Arora, Rao, and Vazirani \cite{ARV04} which gives an  $O(\sqrt{\log n})$ approximation algorithm.

It is also worth noting that the invariant defined in \eqref{eq:cheeger} has an intimate connection with the second eigenvalue of the associated Laplacian operator.
In fact, relaxation of  the minimization problem in \eqref{eq:cheeger} to the Euclidean norm for real functions (i.e. changing the edge expansion 
to the Euclidean $2$-norm of the gradient of real functions which is the energy representable by the Laplacian operator) gives rise to an eigenvalue problem which is efficiently solvable,
while estimating the approximation ratio of this relaxation has led to some fundamental contributions (e.g. see \cite{AlonCheeger,AlonMilmanCheeger}).
 These relations, known as \textit{Cheeger's inequalities}, also exert considerable influence over constructing the expander graphs as well as the study of the mixing time of Markov chains (see e.g. \cite{ApprxCountingMarkov,WidgersonLectureNote}). In general, although the motivating problems in these fields of study are usually different, the synergistic effect of methods and techniques have flourished into one of the most active and productive topics in mathematics and computer science.

Recently, some generalizations of the sparsest cut problem have been studied in the literature. Here, we study a generalization which extends two-way partitioning into $k -$way connected subpartitioning and allows some vertices to lie outside the parts. 

To formulate the problem precisely, let us first fix our notation and terminology. We assume that the data is given as a simple and finite  {\it weighted graph} $G=(V,E,\omega,c)$ in which $\omega:V\to {\mathbb Q}^+$ and $c:E\to {\mathbb Q}^+$ are the vertex and edge weight functions, respectively. Note that 
in the literature close to applications the function $c$ is sometimes referred to as the {\it kernel} or the {\it similarity}, while from a geometric point of view the graph can also be considered as a discrete {\it metric-measure space}, where the distance function is usually chosen to be proportional to some inverse function of $c$. In this setting, 
by an {\it unweighted graph} we mean a graph in which all the vertex and edge weights are equal to $1$.  

Given a graph $G=(V,E,\omega,c)$ and a subset of vertices $ S\subseteq V $, the \textit{edge exapnsion} or the \textit{conductance} of $ S $, is defined as
\[\phi_G(S)\isdef \frac{c(\partial S)}{\omega(S)}, \]
where,
\[\omega(S)\isdef \sum_{u\in S} \omega(u),\quad c(\partial S)\isdef \sum_{
	e\in E(S,\overline{S})} c(e). \]
From a geometric point of view, the conductance can be interpreted as 
a {\it normalized norm of a gradient function} or a {\it normalized energy} (e.g. see \cite{GeometricSpectraRiemann,EigenRemannianGeometry} for more on the geometric interpretations).
The set ${\mathscr D}_{k}(V)$ is defined to be the set of all {\it $k$-subpartitions} $\{A_{{1}},\ldots,A_{{k}}\} \isdef \{A_{{i}}\}^k_1$ of $V$, in which $A_{{i}}$'s are nonempty disjoint subsets of $V$. 
The {\it residue} of a subpartition $\{A_{{1}},\ldots,A_{{k}}\} $ is defined to be the set $R\isdef V-\cup_{{i=1}}^{{k}} A_{{i}}$.
The set of all {\it $k$-partitions} of $V$, which is denoted by ${\mathscr
	P}_{{k}}(V)$, is the subclass of ${\mathscr D}_{{k}}(V)$ containing
all $k$-subpartitions $\{A_{{i}}\}^k_1$ for which $\cup_{{i=1}}^{{k}} A_{{i}}=V$ (i.e. $R=\emptyset$).
A subpartition (or a partition in particular) is said to be {\it connected}
if the subgraph induced on each of its parts is a connected subgraph of $ G $.
A generalization of the sparsest cut problem can be formulated as follows. 
\begin{dfn}{\label{DEFISO}
Given a weighted graph $G=(V,E,\omega,c)$ and a positive integer $k$, $1\leq k\leq |V|$,
the $ k $th {\it isoperimetric number} is defined as,
		\vspace{-5pt}
		\begin{eqnarray*}
			\iota_{{k}}(G) &\isdef&
			\displaystyle{\min_{{ \{A_{{i}}\}^{{k}}_{{1}} \in  {\mathscr
							D}_{{k}}(V) }} } \ \max_{1\leq i\leq k}\
		\phi_G(A_i).
		\end{eqnarray*}
		Furthermore, considering the partitions, the $ k $th {\it minimum normalized cut number} is defined as,
		\vspace{-5pt}
		\begin{eqnarray*}
			\tilde{\iota}_{{k}}(G) &\isdef& \displaystyle{\min_{{
						\{A_{{i}}\}^{{k}}_{{1}} \in  {\mathscr P}_{{k}}(V) }} } \
			\max_{1\leq i\leq k}\
			\phi_G(A_i).
		\end{eqnarray*}
		A vertex $v\in V$ is called a {\it $k$-outlier}, if there exists a minimizing subpartition achieving $\iota_k(G)$, while $v$ lies in its residue. 
		It is well-known that $\iota_2=\tilde{\iota}_2$  (see \cite{JCTB}) and the common value is usually called the {\it Cheeger constant} or {\it edge expansion} in the literature.
	}\end{dfn}
In this regard, Louis et al. in \cite{louis} provide a polynomial time approximation algorithm which outputs a $(1-\epsilon)k$-partition of
the vertex set such that each piece has expansion at most $O_\epsilon(\sqrt{\log n \log k})$ times $\tilde{\iota}_{{k}}(G)$ (for every positive number $ \epsilon $). Also, in \cite{oveis}, higher-order Cheeger's inequalities have been proved which relate the above parameters to the eigenvalues of the associated Laplacian Matrix (see also \cite{JCTB,Miclo}).  

Prior to formulating our problem, let us discuss some facts. 
First, one may note that as an imprecise rule of thumb, changing the cost function of a partitioning problem, 
from the normalized form to the unnormalized form, from partitions to subpartitions, or from the mean (i.e. $1$-norm) to the max (i.e. $\infty$-norm) generally makes the problem more tractable in the sense that finding more efficient algorithms to solve the problem become more probable. One of our major observations in this article is the fact that the restriction of the search space to ``connected'' subpartitions reduces the complexity of the problem too. In particular, this distinction is much comprehensible when the graph is a tree where the restriction on  subpartitions to be connected reduces the complexity of the problem from $ NP- $hard to polynomial time. Also, note that this restriction is to the best of our advantage in the sense that a cluster is more expected to be represented by a connected subgraph than a disconnected one (based on intra-similarity of the objects within a cluster). Hence,  as far as clustering is concerned, this can be considered as an acceptable assumption. 
As a matter of fact, in what follows, we show that such a change to the better will give rise to an efficient algorithm for clustering with a control on the number of parts and outliers.

We denote the main problem, i.e. the multi-way sparsest cut problem with a control on the residue number, by the acronym ``MSC problem'' which is defined as follows. 
\begin{probc}
	{MSC Problem.}
	{A weighted graph $G=(V,E,\omega,c)$, nonnegative integers $\kappa\in \mathbb{Z}^+$ and $\lambda\in \mathbb{Z}^+$ and a positive rational number $\xi\in {\mathbb Q}^+$.}
	{Does there exist a $\kappa$-subpartition of $V$ such as  $\{A_{{i}}\}^\kappa_1 \in \mathscr{D}_\kappa(V)$ such that
		$\displaystyle{\max_{1 \leq i \leq \kappa}}\{\phi_G(A_{{i}})\} \leq \xi$ and its residue number is at most $ \lambda $, i.e. $ |V\setminus \cup_{i=1}^\kappa A_i|\leq \lambda $?}
\end{probc}
\vspace{-2pt}

The MSC problem is known to be a hard problem even when the graph is of its simplest form, i.e. a tree. 
When the graph $ G $ is a tree, it is proved in \cite{dam} that MSC problem is $ NP-$complete even when the tree is unweighted and $ \lambda $ is constant (e.g. $ \lambda=0 $). Nonetheless, it is shown there that the problem is solvable in linear time for weighted trees when we drop the restriction on the residue number (i.e. $ \lambda=|V| $). An improvement of this result has effectively been applied to real clustering problems for large data-sets \cite{pr}.

The main contribution of this article (Algorithm~\ref{alg:main}) is to show that although MSC problem is $ NP-$complete for trees, it becomes tractable when the search space is restricted to connected subpartitions. In other words,  the following problem abbreviated by CMSC can be solved in polynomial time for weighted trees.
\begin{probc}
	{CMSC Problem.}
	{A weighted graph $G=(V,E,\omega,c)$, nonnegative integers $\kappa\in \mathbb{Z}^+$ and $\lambda\in \mathbb{Z}^+$ and a positive rational number $\xi\in {\mathbb Q}^+$.}
{Does there exist a connected $\kappa$-subpartition of $V$ such as $\{A_{{i}}\}^\kappa_1 \in \mathscr{D}_\kappa(V)$ such that
	$\displaystyle{\max_{1 \leq i \leq \kappa}}\{\phi_G(A_{{i}})\} \leq \xi$ and its residue number is at most $ \lambda $, i.e. $ |V\setminus \cup_{i=1}^\kappa A_i|\leq \lambda $?}
\end{probc}
\vspace{-2pt}




This result along with the fact that the minimum spanning tree of a geometric metric-measure space inherits a large part of the geometry of the space, can lead to a good approximation for MSC problem for general graphs. This can justify the importance of the problem on weighted trees when applications are concerned. 
Let us consider some consequences of this result. 

Firstly, note that given a weighted tree $ T $ and integers $ \kappa $ and $ \lambda $, finding the minimum number $ \xi $ for which there exists a connected $ \kappa- $subpartition with the residue number at most $\lambda $ and $ \max_{1 \leq i \leq \kappa}\{\phi_G(A_i)\} \leq \xi $ (as well as finding the minimizing subpartition) can be done in polynomial time by applying our algorithm iteratively along with a simple binary search. 

Secondly, given a weighted tree $ T $ and numbers $ \xi, \lambda $ (the worst edge expansion of the clusters), we can obtain a number $k_{{max}}(T,\xi)$, denoting the maximum number of parts for which the answer to CMSC problem is positive. This by itself is an important piece of information when one considers the large existing literature discussing how to determine the number of clusters for a clustering algorithm (e.g. see \cite{ExtendedKmeans} for $k$-means). 

Thirdly, from another point of view, CMSC problem can be considered as a problem of outlier-robust clustering where a solution will provide information on the number of outliers. 
It is well-known that detection of outliers and anomalies in data-sets are among the most challenging problems in the field, not just because of the hardness of the problem itself, but since the concepts themselves are quite fuzzy and depend on many different parameters as scaling or distribution of the source (e.g. see \cite{FuzzyBook, RobustReview} for the background). These facts, and in particular, lack of a universal sound and precise definition, is among the first obstacles when one is dealing with these kinds of  problems. 
In \cite{pr} some evidence has been discussed that how the data remained in the exterior of the clusters in MSC problem can be justified to be actual outliers in some sense.

Finally, our method can be extended to handle some more general semi-supervised settings where a number of training samples are given by the user which are forced or forbidden to lie in outliers (see  Section~\ref{sec:extensions}).

The organization of forthcoming sections is as follows. In Section~\ref{sec:pre}, we give required definitions and notations as well as the lemmas which justify our algorithm. In Section~\ref{sec:alg}, we present the main algorithm and explain how it can find the optimal subpartition. We also compute the time complexity of our algorithm. Finally, in Section~\ref{sec:extensions}, we discuss some extensions which handle more realistic models. 
\section{Preliminaries} \label{sec:pre}
Let $T$ be a rooted tree with root $r$. There is a natural partial order induced 
through the root on the vertices and edges of $T$ defined as $u \leq v$ for two vertices $u$ and $v$ whenever there is a path $P(r,v,u)$ in $T$ starting from $ r $ and ending at $u$ which contains $v$. Similarly,
$e \leq e'$ for two edges $e$ and $e'$ whenever there is a path $P(r,e',e)$ in $T$ starting from $ r $ and containing $e$ and $e'$ such that $e'$ is closer than $ e $ to $ r $ on $P$. In this setting, note that for any $u \not = r$ there exists a unique minimal vertex $ v $, with 
$v \geq u$ and an edge $e_{{u}} \isdef uv$, where $ v $ and $ e_u $ are called the \textit{parent vertex} and the \textit{parent edge} of $ u $, respectively (and also $ u $ is called the \textit{child} of $ v $). Also, for a given edge $e=uv$ with $u \leq v$ we may refer to $e^-=u$ and $e^+=v$, intermittently. For some technical reasons, we add one new vertex $ r' $ to $ T $ and connect it to $ r $ and define the parent edge of $ r $, $ e_r$, as the edge $rr' $. Also, we set $ \omega(r')=c(e_r)=0 $.  

If $F$ is a subset of edges of $T$, then $M(F)$ is the set of maximal elements of $F$ with respect to the natural partial order of $T$. Given a vertex $ u $ with the parent edge $ e_u $, the subtree $T_u=T_{e_u}$ refers to the subtree induced on the set $ \{v\in V(T): v\leq u\} $. Therefore, $ T_r=T_{e_r}=T $.

Let $ T=(V,E,\omega,c) $ be a weighted tree and $ \xi $ be a fixed positive number. For every integer $0\leq k\leq |V| $, define $\mathscr{C}_{{k}}(T)$ to be the class of all $k$-subpartitions $\mathcal{A}= \{A_i\}_1^k $ such that for each $ 1\leq i\leq k $, $ A_i\subseteq V(T) $ and the induced subgraph of $ T $ on $ A_i $ is connected (i.e. $ A_i $ is a subtree of $ T $). Also, given a subpartition $\mathcal{A}=\{A_i\}_1^k \in \mathscr{C}_{k}(T)$, its \textit{residue set} is defined as $ R(\mathcal{A},T)  \isdef V(T)\setminus \cup_{i=1}^{k} A_{i}$.
We also define,
$$ \phi_T(A_i) \isdef \frac{c(\partial A_i)}{\omega(A_i)}, \ \ 
\phi_T(\mathcal{A}) \isdef \displaystyle{ \max_{1\leq i\leq k}}\ \phi_T(A_{{i}})
\ \ {\rm and} \ \ 
\iota^{C}_{{k}}(T) \isdef \displaystyle{\min_{{ \mathcal{A} \in  \mathscr{C}_{{k}}(T) }}} \phi_T(\mathcal{A}).$$



In the following we describe the idea that our algorithm is based on and also prove the correctness of the algorithm. 
First, note that since we are looking for subsets with small edge expansion, when we cut an edge $ e $, the subset containing $ e^+ $ sustains a loss in its edge expansion. The cause of this deficiency is that the numerator of the edge expansion is added by $ c(e) $ and the denominator is subtracted by $ \omega(T_e) $. With this intuition, for every edge $ e\in E $, define
\begin{equation}
\label{eq:epsilon}
\varepsilon_\xi(e)  \isdef \xi\,
\omega(T_e) + c(e).
\end{equation}
Now, let $ \kappa  $ and $\lambda $ be two nonnegative integers and for every integers $ 1\leq k\leq \kappa$, $0\leq l\leq \lambda $ and vertex $ u\in V(T) $, define  $\mathscr{C}_{\xi}(u,k,l)$ to be the set of all $ k $-subpartitions $\mathcal{A}=\{A_i\}_1^k$ in $ \mathscr{C}_{k}(T_u) $ such that $ u\in A_1 $ and $|R(\mathcal{A},T_u)| \leq l$ and for each $ 2\leq i\leq k $, we have $\phi_T(A_{{i}}) \leq \xi$. For each such subpartition $\mathcal{A}$, let $F_{\mathcal{A}}\isdef \partial A_1\setminus \{e_u\}$. Note that any pair of edges in $F_{\mathcal{A}}$ are incomparable and define,
\[\gamma_\xi(\mathcal{A}) \isdef \displaystyle \sum_{e \in F_{\mathcal{A}}} \varepsilon_\xi(e).\]
We will shortly see that minimizing the edge expansion $ \phi_T(A_1) $, in some sense, is equivalent to minimizing $ \gamma_\xi(\mathcal{A}) $ (see \eqref{eq:gamma}). Thus, define,
\begin{equation} \label{eq:gammadef}
 \Gamma_\xi(u,k,l) \isdef 
\displaystyle\min_{\mathcal{A} \in \mathscr{C}_\xi(u,k,l)}\ \gamma_\xi(\mathcal{A}).
\end{equation}
On the other hand, for every integers $0\leq k\leq \kappa$ and $0\leq l\leq \lambda$ and vertex $ u\in V(T) $, define
 $\mu_{\xi}(u,k,l)$ to be equal to $ 1 $ if there exists a connected $ k $-subpartition  $\mathcal{A}=\{A_i\}_1^k\in \mathscr{C}_k(T_u) $ such that  $\phi_T(\mathcal{A}) \leq \xi$ and $|R(\mathcal{A},T_u)| \leq l$ and it is equal to $ 0 $, otherwise. Note that, although $ A_i$'s are subsets of  $V(T_u) $, $ \phi_T(A_i) $ is computed in the whole tree $ T $.
Also, note that for every vertex $ u\in V(T) $ and integer $ l $, we have
\[ \mu_\xi(u,0,l) =\begin{cases}
1 & \text{ if } |V(T_u)|\leq l,\\
0 & \text{ o.w.}
\end{cases} \]

In fact, our main goal is to compute the parameter $ \mu_{\xi}(r,\kappa,\lambda) $, since evidently the answer to CMSC problem is yes if and only if $ \mu_{\xi}(r,\kappa,\lambda)=1$.
In the sequel, we are going to show that the parameters $\Gamma_\xi(u,k,l) $ and $\mu_{\xi}(u,k,l)$ can be computed recursively in a breath-first scanning of vertices 
towards the root. First, in the following, we explain how one can compute $ \mu_{\xi}(u,k,l) $ recursively in terms of the values $ \Gamma_\xi(u,k,l) $, $ 1\leq k\leq \kappa, 0\leq l\leq \lambda $.
For this, let $ \xi,\kappa, \lambda $ be fixed and given a vertex $ u $, let $ (u_1,\ldots,u_d) $ be an ordering of all of its children. Now, for every integers $ 0\leq k\leq \kappa $, $ 0\leq l\leq \lambda $, define
\begin{equation}
U(1,k,l) \isdef \mu_\xi(u_1,k,l),
\end{equation}
and for every $ 2\leq i\leq d $, define
\begin{equation}\label{eq:recmu}
U(i,k,l) \isdef \begin{cases}
1 & \parbox{8cm}{if there exist $0\leq k'\leq k$  and  $0\leq l'\leq l-1$ such that
 $U(i-1,k',l')= \mu_\xi(u_i,k-k',l-1-l')=1,$} \\
0 & \text{o.w.}
\end{cases}
\end{equation}
In the following lemma, we show how one can use the recursion in \eqref{eq:recmu} to compute the function $ \mu_\xi $.
\begin{lem}\label{lem:mu}
Let $ u $ be a vertex in a rooted tree $ T $, $ \xi\geq 0 $ be a number and $ \lambda\geq 0, \kappa\geq 1 $ be two integers. 
Also, let $ u_1,\ldots, u_d $ be the children of $ u $ in $ T $.
For every integers $ 0\leq k\leq \kappa $, $ 0\leq l\leq \lambda $, $ \mu_\xi(u,k,l)=1 $ if and only if either $ \Gamma_\xi(u,k,l) \leq \xi\, \omega(T_{u}) - c(e_u)  $, or $ U(d,k,l)=1 $.
\end{lem}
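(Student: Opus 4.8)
The plan is to prove the equivalence by a dichotomy on the position of the vertex $u$: in any connected $k$-subpartition of $T_u$, either $u$ lies in one of the parts or it lies in the residue. I will show that the existence of a witness of the first kind is equivalent to $\Gamma_\xi(u,k,l)\le\xi\,\omega(T_u)-c(e_u)$, and that of a witness of the second kind is equivalent to $U(d,k,l)=1$; since $\mu_\xi(u,k,l)=1$ holds exactly when a witness of one kind or the other exists, the lemma follows by combining the two sub-equivalences.

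For witnesses with $u$ in a part, relabel so that $u\in A_1$. Such a subpartition $\mathcal A=\{A_i\}_1^k$ lies in $\mathscr C_\xi(u,k,l)$ automatically, because the requirements defining $\mathscr C_\xi(u,k,l)$ — namely $u\in A_1$, $|R(\mathcal A,T_u)|\le l$, and $\phi_T(A_i)\le\xi$ for every part other than $A_1$ — are exactly what remains of ``$\phi_T(\mathcal A)\le\xi$ and $|R(\mathcal A,T_u)|\le l$'' once the bound on $\phi_T(A_1)$ is dropped. The heart of this case is to recover that dropped bound from $\gamma_\xi$. Since $A_1$ induces a subtree of $T_u$ containing $u$, one has $\partial A_1=\{e_u\}\sqcup F_{\mathcal A}$ and $V(T_u)\setminus A_1=\bigsqcup_{e\in F_{\mathcal A}}V(T_e)$, each $e\in F_{\mathcal A}$ cutting off exactly the subtree $T_e$; hence $c(\partial A_1)=c(e_u)+\sum_{e\in F_{\mathcal A}}c(e)$ and $\omega(A_1)=\omega(T_u)-\sum_{e\in F_{\mathcal A}}\omega(T_e)$, so that
\begin{align*}
\phi_T(A_1)\le\xi
&\iff c(e_u)+\sum_{e\in F_{\mathcal A}}\bigl(c(e)+\xi\,\omega(T_e)\bigr)\le\xi\,\omega(T_u)\\
&\iff \gamma_\xi(\mathcal A)=\sum_{e\in F_{\mathcal A}}\varepsilon_\xi(e)\le\xi\,\omega(T_u)-c(e_u).
\end{align*}
Minimising over $\mathscr C_\xi(u,k,l)$ — with the convention that $\Gamma_\xi(u,k,l)=+\infty$ if this set is empty — shows that a witness with $u$ in a part exists if and only if $\Gamma_\xi(u,k,l)\le\xi\,\omega(T_u)-c(e_u)$.

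For witnesses with $u$ in the residue, connectivity is decisive: as $u$ belongs to no part and each $A_i$ is connected, every $A_i$ is contained in one of the child-subtrees $T_{u_1},\dots,T_{u_d}$ (the components of $T_u-u$), and $R(\mathcal A,T_u)\setminus\{u\}$ likewise distributes over these subtrees. Hence a witness of this kind is the same datum as a choice of nonnegative integers with $k_1+\cdots+k_d=k$ and $\ell_1+\cdots+\ell_d\le l-1$ for which $\mu_\xi(u_j,k_j,\ell_j)=1$ for all $j$ — the subtracted unit being the residue vertex $u$ itself. It then remains to check, by induction on $i$, that $U(i,k,l)$ is precisely the indicator of the existence of such an assignment over the first $i$ children: the base case $U(1,\cdot,\cdot)$ comes from monotonicity of $\mu_\xi$ in its last argument, and the inductive step is a routine splitting-and-recombining of an optimal assignment at the $i$-th child. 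This yields that a witness with $u$ in the residue exists if and only if $U(d,k,l)=1$, which finishes the argument.

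The step I expect to be the main obstacle is the residue case — specifically, establishing that the parts and the residue localise to the child-subtrees (which genuinely uses connectedness of the parts), and carrying out the exact bookkeeping of the outlier budget across the children, including the unit charged to $u$, and in the degenerate situations ($k=0$, $l=0$, an empty part, an empty subpartition, or $d=1$). The first case, by contrast, is essentially the single algebraic identity displayed above once the complement structure of a rooted subtree is written out.
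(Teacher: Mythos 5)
Your proof follows essentially the same route as the paper's: a dichotomy on whether $u$ lies in a part or in the residue, the algebraic equivalence $\phi_T(A_1)\le\xi \Leftrightarrow \gamma_\xi(\mathcal A)\le \xi\,\omega(T_u)-c(e_u)$ (which is exactly the paper's chain \eqref{eq:gamma}), and the localisation of parts and residue to the child subtrees $T_{u_1},\dots,T_{u_d}$ in the residue case. The two sub-equivalences you isolate are precisely the four implications the paper proves, and your treatment of the first case, including the convention $\Gamma_\xi(u,k,l)=+\infty$ on an empty $\mathscr C_\xi(u,k,l)$, is correct and if anything slightly cleaner.

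The one step you defer as ``routine'' is, however, exactly the step that does not close as stated: the claim that $U(i,k,l)$ is the indicator of an assignment $k_1+\cdots+k_i=k$, $\ell_1+\cdots+\ell_i\le l-1$ with $\mu_\xi(u_j,k_j,\ell_j)=1$ for all $j\le i$. Unrolling \eqref{eq:recmu}: $U(2,k,l)$ distributes a total residue of $l-1$ over $u_1,u_2$ as intended, but $U(3,k,l)$ requires $U(2,k',l')=1$ with $l'\le l-1$ and $\mu_\xi(u_3,k-k',l-1-l')=1$, which forces a total of $\ell''+(l'-1-\ell'')+(l-1-l')=l-2$; in general the recursion deducts one unit of the residue budget at every step $i\ge 2$ rather than once for the vertex $u$, so for $d\ge 3$ your induction hypothesis is false for the recursion as literally written. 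You correctly identified this bookkeeping as the expected obstacle but did not carry it out; to be fair, the paper's own proof has the same lacuna — it simply asserts ``by definition $U(d,k,l)=1$'' without unrolling \eqref{eq:recmu}, and the equivalence it invokes is the corrected one (total residue over all children $\le l-1$), not the one \eqref{eq:recmu} actually computes. A complete proof must either repair the recursion (deduct the unit for $u$ only once, e.g.\ only at $i=d$ or in the base case) or prove the lemma for the repaired $U$.
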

\begin{proof}
Suppose that $\mu_{\xi}(u,k,l)=1 $ and let $\mathcal{A}=\{A_i\}_1^k \in \mathscr{C}_{k}(T_u) $ be a connected  subpartition where $ R(\mathcal{A},T_u)\leq l$ and $ \phi_T(\mathcal{A})\leq \xi $. First, assume that $ u\in R(\mathcal{A},T_u)$. Thus, $ \mathcal{A} $ itself can be partitioned into $ d $ connected subpartitions $ \mathcal{A}_1,\ldots, \mathcal{A}_d $ such that $ \mathcal{A}_i\in \mathscr{C}_{k_i}(T_{u_i}) $, for some integers $ k_i $, where $ k_1+\cdots+k_d=k $. Also, let $ l_i= |R(\mathcal{A}_i,T_{u_i})| $. Therefore, by definition $ \mu_{\xi}(u_i,k_i,l_i)=1 $ and $ l_1+\cdots+l_d=|R(\mathcal{A},T_u)\setminus\{u\}|\leq l-1 $. Thus, again by definition $U(d,k,l)=1$.
Next, suppose that $ u\not \in R(\mathcal{A},T_u) $ and so, without loss of generality, assume that $ u\in A_1 $. 
Then,
\begin{align}\label{eq:gamma}
\begin{split}
\phi_T(A_1)\leq \xi &\ \Leftrightarrow \   c(\partial A_1) \leq \xi\, \omega(A_1) \ \Leftrightarrow \
c(e_u)+\sum_{e\in F_{\mathcal{A}}} c(e) \leq \xi \left(\omega(T_{u})-\sum_{e\in F_{\mathcal{A}}} \omega(T_e)\right) \\
&\ \Leftrightarrow \ \gamma_\xi(\mathcal{A}) \leq \xi\, \omega(T_{u})-c(e_u).
\end{split}
\end{align}
Therefore, $ \Gamma_\xi(u,k,l)\leq \xi\, \omega(T_u)-c(e_u) $. This implies that if  $ \mu_\xi(u,k,l)=1 $, then either $ \Gamma_\xi(u,k,l) \leq \xi\, \omega(T_{u}) - c(e_u)  $, or $ U(d,k,l)=1 $.

Now, suppose that $ U(d,k,l)=1 $. Then, there exist integers $ l_1,\ldots,l_d $ and $ k_1,\ldots, k_d $ such that $ \sum_{i=1}^d l_i=l-1 $, $ \sum_{i=1}^d k_i=k $ and $ \mu_\xi(u_i,k_i,l_i)=1 $, for all $ 1\leq i\leq d $.
Thus, for each $ 1\leq i\leq d $, there exists $ \mathcal{A}_i \in \mathscr{C}_{k_i}(T_{u_i})$ such that $ R(\mathcal{A}_i,T_{u_i})\leq l_i $ and $ \phi_T(\mathcal{A}_i)\leq \xi $. Define $ \mathcal{A}=\cup_{i=1}^d \mathcal{A}_i $. Thus, $ \mathcal{A}\in \mathscr{C}_{k}(T_{u}) $ and $ |R(\mathcal{A},T_u)|\leq l $. Hence, $  \mu_{\xi}(u,k,l)=1 $. 

Finally, suppose that $ \Gamma_\xi(u,k,l) \leq \xi\, \omega(T_u) - c(e_u) $. Also, let $ \mathcal{A}\in \mathscr{C}_\xi(u,k,l) $ be a minimizer with $ \gamma_\xi(\mathcal{A})=\Gamma_\xi(u,k,l) $. Then, by definition, for every $ 2\leq i\leq k $, $ \phi_T(A_i)\leq \xi $ and $ |R(\mathcal{A},T_u)|\leq l $ and by \eqref{eq:gamma}, $ \phi_T(A_1)\leq \xi $. Hence, $\mu_{\xi}(u,k,l)=1 $. This completes the proof.
\end{proof}

As we see in Lemma~\ref{lem:mu}, in order to obtain the value of $ \mu_\xi(u,k,l) $, we require to have the value of $ \Gamma_\xi(u,k,l) $.  
In the next step, we show that  given $\xi,\kappa$ and $\lambda$, how one may compute $\Gamma_\xi(u,k,l)$ efficiently for all vertices $u$ and integers $1\leq k\leq \kappa$, $ 0\leq l\leq \lambda $.
For this, let $ \xi,\kappa, \lambda $ be fixed and given a vertex $ u $, let $ (u_1,\ldots,u_d) $ be an ordering of all of its children. Now, for every integers $ 1\leq k\leq \kappa $, $ 0\leq l\leq \lambda $ and $ 1\leq i\leq d $, define

\begin{equation}\label{eq:X}
X_\xi(i,k,l)\isdef \begin{cases}
\min\{\varepsilon_\xi(uu_i), \Gamma_\xi(u_i,k,l)\} &  \text{if } \mu_{\xi}(u_i,k-1,l)=1, \\
\Gamma_\xi(u_i,k,l) & \text{o.w.}
\end{cases}
\end{equation}
Also, define 
\begin{equation}\label{eq:Yinit}
Y_\xi(1,k,l)\isdef X_\xi(1,k,l),
\end{equation}
and for every $ 2\leq i\leq d $, define
\begin{equation}\label{eq:Y}
Y_\xi(i,k,l)\isdef \min\{Y_\xi(i-1,k',l')+X_\xi(i,k+1-k',l-l'):\ 1\leq k'\leq k, 0\leq l'\leq l\}.
\end{equation}
The following lemma shows how to compute the function $ \Gamma_\xi $ using recursion \eqref{eq:Y}.
\begin{lem}\label{lem:Gamma}
Let $ T $ be a rooted tree, $ \xi\geq 0 $ be a number and $ \lambda\geq 0, \kappa\geq 1 $ be two integers. Then, for every vertex $u\in V(T)$ with $d$ children $ (u_1,\ldots, u_d) $ and every integers
$ 1\leq k\leq \kappa  $ and $ 0\leq l \leq \lambda $, we have
\begin{equation}\label{eq:Gamma}
\Gamma_\xi(u,k,l)= Y_\xi(d,k,l).
\end{equation}
\end{lem}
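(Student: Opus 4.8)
The plan is to prove, by induction on $i$, the more refined statement that $Y_\xi(i,k,l)=\Gamma^{(i)}_\xi(u,k,l)$ for all $1\le k\le\kappa$ and $0\le l\le\lambda$, where $\Gamma^{(i)}_\xi$ is the analogue of $\Gamma_\xi$ in which the ambient tree $T_u$ is replaced by the \emph{partial subtree} $T^{(i)}_u$ obtained from $T_u$ by deleting $V(T_{u_{i+1}}),\dots,V(T_{u_d})$ (so $T^{(0)}_u=\{u\}$ and $T^{(d)}_u=T_u$): a member of $\mathscr{C}^{(i)}_\xi(u,k,l)$ is a connected $k$-subpartition $\mathcal A=\{A_j\}_1^k$ of $T^{(i)}_u$ with $u\in A_1$, residue of size $\le l$, and $\phi_T(A_j)\le\xi$ for $j\ge2$ (expansions always taken in the whole tree $T$, the expansion of the first part deliberately left unconstrained), while $\gamma_\xi(\mathcal A)$ is the sum of $\varepsilon_\xi$ over the boundary edges of $A_1$ lying inside $T^{(i)}_u$ other than $e_u$. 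Taking $i=d$ then yields \eqref{eq:Gamma}. Throughout I will use the convention $\min\emptyset=+\infty$ and the elementary monotonicity remarks that $\Gamma_\xi(u_i,m,l)$ and $X_\xi(i,m,l)$ are non-increasing in $l$ while $\mu_\xi(u_i,m,l)$ is non-decreasing in $l$, since enlarging the residue budget only enlarges the feasible set.

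The first step is to read off from \eqref{eq:X} and the definition of $\mu_\xi$ the following interpretation of $X_\xi$: for $1\le m\le\kappa$, $X_\xi(i,m,l)$ is the minimum, over two kinds of configuration of $T_{u_i}$, of the associated contribution to $\gamma_\xi$ — kind (a) being a subpartition in $\mathscr{C}_\xi(u_i,m,l)$ (this is the case ``$u_i\in A_1$'': the index slot $m$ holds the piece $A_1\cap V(T_{u_i})$ together with $m-1$ further parts, the edge $uu_i$ is internal, and the contribution is $\gamma_\xi$ of that subpartition), and kind (b) being a connected $(m-1)$-subpartition of $T_{u_i}$ all of whose parts have expansion $\le\xi$ and whose residue has size $\le l$ (this is the case ``$uu_i$ cut'': the index slot is spent by the portion of $A_1$ lying outside $T_{u_i}$, and the contribution is $\varepsilon_\xi(uu_i)$). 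With this in hand the base case $i=1$ is immediate: in any $\mathcal A\in\mathscr{C}^{(1)}_\xi(u,k,l)$ either $u_1\in A_1$, and restricting $\mathcal A$ to $T_{u_1}$ exhibits a kind-(a) configuration of index $k$ with the same $\gamma_\xi$, or $u_1\notin A_1$, forcing $A_1=\{u\}$ and exhibiting a kind-(b) configuration of index $k$ with $\gamma_\xi(\mathcal A)=\varepsilon_\xi(uu_1)$; minimizing over the two alternatives gives $X_\xi(1,k,l)=Y_\xi(1,k,l)$ by \eqref{eq:Yinit}.

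For the inductive step ($i\ge2$) I would cut a configuration along the last child subtree $T_{u_i}$. Since $A_1\ni u$ while each $A_j$ with $j\ge2$ is connected and avoids $u$, every such $A_j$ lies entirely inside $T^{(i-1)}_u$ or entirely inside $T_{u_i}$, and $A_1$ splits as $B_1:=A_1\cap V(T^{(i-1)}_u)\ni u$ and $C_1:=A_1\cap V(T_{u_i})$, the latter being nonempty precisely when $u_i\in A_1$. This decomposes $\mathcal A$ canonically into a member $\mathcal B$ of $\mathscr{C}^{(i-1)}_\xi(u,k',l')$ (with $k'$ its number of parts and $l'$ the size of its residue) and a configuration $\mathcal C$ of $T_{u_i}$ which is of kind (a) with index $k+1-k'$ when $u_i\in A_1$ and of kind (b) with index $k+1-k'$ (so $k-k'$ genuine parts) when $u_i\notin A_1$; disjointness of $V(T^{(i-1)}_u)$ and $V(T_{u_i})$ yields $\gamma_\xi(\mathcal A)=\gamma_\xi(\mathcal B)+(\text{contribution of }\mathcal C)$ and $|R(\mathcal A,T^{(i)}_u)|=l'+|R(\mathcal C,T_{u_i})|$. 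Using the inductive hypothesis $Y_\xi(i-1,k',l')=\Gamma^{(i-1)}_\xi(u,k',l')\le\gamma_\xi(\mathcal B)$, the interpretation of $X_\xi$ above, and the monotonicity remarks to pass from the actual residue of $\mathcal C$ up to the budget $l-l'$, one gets $\gamma_\xi(\mathcal A)\ge Y_\xi(i-1,k',l')+X_\xi(i,k+1-k',l-l')\ge Y_\xi(i,k,l)$, hence $\Gamma^{(i)}_\xi(u,k,l)\ge Y_\xi(i,k,l)$. The reverse inequality runs this backwards: pick $k',l'$ attaining the minimum in \eqref{eq:Y}, realize $Y_\xi(i-1,k',l')$ by some $\mathcal B\in\mathscr{C}^{(i-1)}_\xi(u,k',l')$ and $X_\xi(i,k+1-k',l-l')$ by a kind-(a) or kind-(b) configuration $\mathcal C$ of $T_{u_i}$, and glue — merging $B_1$ with $C_1$ across the (now internal) edge $uu_i$ in the kind-(a) case, and keeping $A_1=B_1$ with $uu_i$ cut in the kind-(b) case. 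A short check shows the glued object lies in $\mathscr{C}^{(i)}_\xi(u,k,l)$ (the part count is $k'+(k+1-k')-1=k$ in case (a) and $k'+(k-k')=k$ in case (b), the residue has size $\le l'+(l-l')=l$, and the expansion constraint survives on every non-first part), with $\gamma_\xi$ equal to $Y_\xi(i,k,l)$; this gives $\Gamma^{(i)}_\xi(u,k,l)\le Y_\xi(i,k,l)$.

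The part of the proof needing the most care is the bookkeeping of the part index: the ``$+1$'' in the argument $k+1-k'$ of $X_\xi$ is precisely the slot shared between the piece of $A_1$ already assembled from $u_1,\dots,u_{i-1}$ and the (possibly empty) piece contributed by $T_{u_i}$, and one must verify this offset is consistent with \emph{both} branches of \eqref{eq:X} — in particular that the ``$m-1$'' appearing in $\mu_\xi(u_i,m-1,l)$ in branch (b) is exactly the number of true parts of a kind-(b) configuration of $T_{u_i}$. A secondary point, easy to slip on, is keeping the expansion constraint $\phi_T(\cdot)\le\xi$ attached to exactly the non-first parts through every split and every gluing, and never to the first part inside a subtree — that constraint is only recorded implicitly through $\gamma_\xi$ and is finally discharged at the root via \pref{lem:mu}.
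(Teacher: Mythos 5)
Your proof is correct and follows essentially the same route as the paper's: an induction that peels off one child subtree at a time (your partial subtrees $T^{(i)}_u$ play exactly the role of the paper's $T'=T\setminus T_{u_d}$ and $T''=T\setminus(\cup_{i=1}^{d-1}T_{u_i})$), with the same key bookkeeping $k'+k''=k+1$ coming from $A_1$ straddling the cut. You are merely more explicit than the paper about the two directions of the inequality and the interpretation of the two branches of $X_\xi$.
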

\begin{proof}
We prove the lemma by induction on the number $ d $.
Let $ \mathcal{A}=\{A_i\}_1^k \in \mathscr{C}_\xi(u,k,l) $ be a $ k $-subpartition.
 First, suppose that $ d=1 $. If $ uu_1\in F_{\mathcal{A}} $, then $ \gamma_\xi(\mathcal{A}) =\varepsilon_\xi(uu_1) $ and $ A_2,\ldots,A_k\subseteq V(T_{u_1}) $, so $ \mu_{\xi} (u_1,k-1,l)=1 $. Also, if $ uu_1\not \in F_{\mathcal{A}} $, then $ u_1\in A_1 $ and $ \gamma_\xi(\mathcal{A})\geq \Gamma_{\xi}(u_1,k,l) $. Therefore, $ \Gamma_\xi(u,k,l)= X_\xi(1,k,l)= Y_\xi(1,k,l) $ as in \eqref{eq:X} and \eqref{eq:Yinit}.

Now, suppose that $ d\geq 2 $. Let $ T'\isdef T\setminus T_{u_d} $ and $ T''\isdef T\setminus (\cup_{i=1}^{d-1} T_{u_i}) $ and $\Gamma(k,l)$, $\Gamma'(k,l)$ and $\Gamma''(k,l) $ be the values of $ \Gamma_\xi(u,k,l) $ for the trees $ T $, $ T' $ and $ T'' $, respectively. Also, let $ l'=|R(\mathcal{A},T)\cap V(T')| $, $ l''=l-l' $ and let $ k' $ (resp. $ k'' $) be the number of sets $ A_i $ which intersect $ V(T') $ (resp.  $ V(T'') $). Then, evidently we have $ k'+k''=k+1 $ (note that $ A_1 $ intersects both $ V(T') $ and $ V(T'') $) and $ \gamma_\xi(\mathcal{A})\geq \Gamma'(k',l')+\Gamma''(k'',l'')$.  
Therefore,
$$ \Gamma(k,l)=\min \{\Gamma'(k',l')+\Gamma''(k'',l''): \ k'+k''=k+1, l'+l''=l \}.  $$
On the other hand, by the induction hypothesis, we have 
$\Gamma'(k',l')= Y_\xi(d-1,k',l')$  and  $\Gamma''(k'',l'')=X_\xi(d,k'',l'') $.
Hence, by \eqref{eq:Y}, we have $ \Gamma(k,l) =Y_\xi (d,k,l) $ and we are done.
\end{proof}

\section{The algorithm} \label{sec:alg}
In this section, using Lemmas~\ref{lem:mu} and \ref{lem:Gamma}, we provide an algorithm to solve the CMSC problem for all weighted trees. The cores of the algorithm are two dynamic programmings. The final solution to the problem is given in \pref{alg:main} which scans the vertices in a BFS order towards the root $ r $ and computes recursively the values of $ \Gamma_\xi(u,k,l) $ and $ \mu_{\xi}(u,k,l) $, for $ 1\leq k\leq \kappa  $ and $ 0\leq l\leq \lambda $. The structure of \pref{alg:main} which deploys Algorithms~\ref{alg:Gamma} and \ref{alg:mu} as two subroutines, is as follows.

First, for all leaves $ u $ (vertices with no children), it computes the values of $ \Gamma_\xi(u,k,l) $ and $ \mu_{\xi}(u,k,l) $ (Lines~\ref{init1}-\ref{init2} in \pref{alg:main}). Next, for a vertex $ u $, with children $ (u_1,\ldots, u_d) $, according to \pref{lem:Gamma}, it applies a dynamic programming (\pref{alg:Gamma}) based on the recursion given in Equations~\eqref{eq:Yinit} and \eqref{eq:Y}, to obtain the value of $ \Gamma_\xi(u,k,l) $, assuming the values of $ \mu_{\xi}(u_i,k,l) $ and $ \Gamma_\xi(u_i,k,l) $ are given.
Finally, according to Lemma~\ref{lem:mu}, it applies another dynamic programming (\pref{alg:mu}) based on the recursion given in \eqref{eq:recmu} to obtain the value of $ \mu_{\xi}(u,k,l) $, assuming the values of $ \Gamma_\xi(u,k,l) $ and $ \mu_{\xi}(u_i,k,l) $ are given. The backtracking ends up outputting the value of $ \mu_{\xi}(r,\kappa,\lambda) $ which is equal to $ 1 $ if and only if there exists a connected $ \kappa $-subpartition $ \mathcal{A}$ with $  \phi_T(\mathcal{A}) \leq \xi$ and $ |R(\mathcal{A},T)|\leq \lambda $. This completes the solution. 

\begin{algorithm}[ht]
	\caption{\label{alg:Gamma} \\
		\textbf{Input:} A weighted tree $(T,\omega,c)$, a rational number $\xi$ and integers $\kappa\geq 1$ and $\lambda\geq 0$. A vertex $ u\in V(T) $ with children $ (u_1,\ldots,u_d) $. The numbers $ \mu_{\xi}(u_i,k,l) $ and $ \Gamma_\xi(u_i,k,l) $, for all $ 1\leq i\leq d $, $  1\leq k\leq \kappa $ and $ 0\leq l\leq \lambda $. \\
		\textbf{Output:} The numbers $ \Gamma_\xi(u,k,l) $  for all $  1\leq k\leq \kappa $ and $ 0\leq l\leq \lambda$.	
	}
	\begin{algorithmic}[1]	
\State \label{ln:1-1} Set $\varepsilon_i:= \xi \omega(T_{u_i})+c(uu_i)$, $\forall\, 1\leq i\leq d $;
		
		
	\For{$ i=1 :d $} \label{alg1-l1}
	\For{$ l=0:\lambda $}
	\For{$ k=1:\kappa $}
	\If {$\mu_{\xi}(u_i,k-1,l)=1 $ and $\varepsilon_i \leq \Gamma_\xi(u_i,k,l) $}
	\label{ln:1-5} 
		  \State $X(i,k,l) \isdef \varepsilon_i$;
	  \Else
	      \State $X(i,k,l) \isdef \Gamma_\xi(u_i,k,l)$;
	  \EndIf
	\EndFor 
	\EndFor 
	\EndFor \label{alg1-l2} 
\State \label{ln:1-13} Set $ Y(1,k,l):= X(1,k,l) $, $\forall\, 1\leq k\leq \kappa, 0\leq l\leq \lambda$; 
\For{$ i=2 :d $}\label{alg1-l3}
\For{$ l=0:\lambda $}
\For{$ k=1:\kappa $}
\State $ Y(i,k,l)\isdef +\infty $;
\For{$ l'=0:l$}
\For{$ k'=1:k$}

\State $Y(i,k,l) \isdef \min\{ Y(i,k,l),Y(i-1,k',l')+X(i,k+1-k',l-l') \}$;\label{alg1-l20}


\EndFor 
\EndFor 

\EndFor 
\EndFor 
\EndFor 
\label{alg1-l4} 
\State  \label{ln:1-26} Set $ \Gamma_{\xi}(u,k,l)\isdef Y(d,k,l) $, $\forall\, 1\leq k\leq \kappa, 0\leq l\leq \lambda$; 

\State \Return $\Gamma_{\xi}(u,k,l) $, for all $ 1\leq k\leq \kappa $ and $0\leq l\leq \lambda  $;	
\end{algorithmic}
\end{algorithm}

\begin{algorithm} [ht]
	\caption{\label{alg:mu} \\
		\textbf{Input:} A weighted tree $(T,\omega,c)$, a rational number $\xi$ and integers $\kappa\geq 1$ and $\lambda\geq 0$. A vertex $ u\in V(T) $ with children $ (u_1,\ldots,u_d) $. The numbers $ \Gamma_\xi(u,k,l) $, for all $  1\leq k\leq \kappa $ and $ 0\leq l\leq \lambda $. The numbers $ \mu_{\xi}(u_i,k,l) $ for all $ 1\leq i\leq d $, $ 1\leq k\leq \kappa $ and  $ 0\leq l\leq \lambda $. \\
		\textbf{Output:} The numbers $ \mu_{\xi}(u,k,l) $ for all $ 1\leq k\leq \kappa $ and $ 0\leq l\leq \lambda$.	
	}
	
	\begin{algorithmic}[1]	
		
		
		\State Let $ e_u $ be the parent edge of $ u $ and $ u_1,\ldots,u_d $ be the children of $ u $.
		\For{$ l=0:\lambda $} \label{alg2-l1}
		\For{$ k=1:\kappa $}
		\State $ \mu_\xi(u,k,l):= 0 $;
		\If {$ \Gamma_\xi(u,k,l)\leq \xi \omega(T_u)-c(e_u) $} \label{ln:2-5}
		\State $\mu_\xi(u,k,l):= 1 $;
		\EndIf
		\EndFor
		\EndFor \label{alg2-l2}
		
		\State \label{ln:2-10} Define $U(1,k,l) \isdef \mu_\xi(u_1,k,l)$, for every $ 0\leq k\leq \kappa, 0\leq l\leq \lambda $;
		
		\For{$ i=2:d $} \label{alg2-l3}
		\For{$ l=0:\lambda $}
		\For{ $ k=0: \kappa $}
		\If {$\mu_\xi(u,k,l)= 0 $}
		\State $ U(i,k,l):=0 $;
		\For{$ l'=0:l-1 $}
		\For{ $ k'=0: k $}
		\If {$ U(i-1,k',l')= \mu_\xi(u_i,k-k',l-1-l')=1 $} \label{ln:2-18}
		\State $ U(i,k,l):=1 $ and go to Line~\ref{line:endif};
		\EndIf 
		\EndFor
		\EndFor
		\label{line:endif}\EndIf 
		\EndFor
		\EndFor
		\EndFor \label{alg2-l4}
		\State \label{ln:2-27} For every $ 0\leq k\leq \kappa $ and $ 0\leq l\leq \lambda  $, if $ U(d,k,l)=1 $, then set $ \mu_\xi(u,k,l):=1 $.

		%
		\State \Return $\mu_{\xi}(u,k,l) $, for all $0\leq l\leq \lambda  $;	
	\end{algorithmic}
\end{algorithm}

\begin{algorithm} [ht]
	\caption{\label{alg:main} \\
		\textbf{Input:} A weighted tree $(T,\omega,c)$, a rational number $\xi$ and integers $\kappa\geq 1$ and $\lambda\geq 0$. \\
		\textbf{Output:} Decide if there exists $ \mathcal{A}\in \mathscr{C}_k(T) $ where $ \phi_T(\mathcal{A})\leq \xi $ and $ |R(\mathcal{A},T)|\leq \lambda $?
	}
	
	\begin{algorithmic}[1]	
		\State Root $T$ with an arbitrary node $r$;
		\State Order all nodes in BFS order with respect to $r$, as $v_{1},...,v_{n} = r$;	
		\State Set $i:=1$;
		\While{ $i\leq n$}
		\State Let $ u:=v_i $ and $ e_u $ be the parent edge of $ u $ and $ u_1,\ldots,u_d $ be the children of $ u $.
		\State Initialize $ \mu_\xi(u,k,l):=0 $, $\forall\ 0\leq k\leq \kappa, 0\leq l\leq \lambda$;
		\State Initialize $ \Gamma_\xi(u,k,l):=+\infty $, $\forall\  1\leq k\leq \kappa, 0\leq l\leq \lambda $;
		\If {$ d=0 $} \label{init1}
		\State Set $ \mu_\xi(u,0,l):=1, \forall\ 1\leq l\leq \lambda$.
		\If {$ c(e_u)\leq \xi \omega(u) $}  
		\State Set $ \mu_{\xi}(u,1,l):=1,\forall\ 0\leq l\leq \lambda$. 
		\EndIf
		\State Set $ \Gamma_\xi(u,1,l):=0 $, $\forall \ 0\leq l\leq \lambda$; \label{init2}
		\Else
		\State Using \pref{alg:Gamma}, find the value of $ \Gamma_\xi(u,k,l) $, for all $ 0\leq l\leq \lambda $ and $ 1\leq k\leq \kappa $; 
		\State Using \pref{alg:mu}, find the value of $\mu_{\xi}(u,k,l) $, for all $ 0\leq l\leq \lambda $ and $ 0\leq k\leq \kappa $;
		\EndIf
		\State \label{ln:endwhile} $ i\leftarrow i+1 $;		
		\EndWhile
		\State If $ \mu_{\xi}(u,\kappa,\lambda)=1$, then \Return Yes. Otherwise, \Return No.
	\end{algorithmic}
\end{algorithm}


\subsection{Time complexity}
The time complexity of the provided algorithms can be computed as follows. In \pref{alg:Gamma}, Lines~\ref{alg1-l1}-\ref{alg1-l2} can be done in $ O(d(\lambda+1)\kappa) $. Also, Lines~\ref{alg1-l3}-\ref{alg1-l4} can be performed in $ O(d(\lambda+1)^2 \kappa^2) $. 
In \pref{alg:mu}, Lines~\ref{alg2-l1}-\ref{alg2-l2} run in $ O((\lambda+1)\kappa) $ and Lines~\ref{alg2-l3}-\ref{alg2-l4} run in $ O(d(\lambda+1)^2 \kappa^2) $. Hence, the runtime of \pref{alg:main} is in $ O((\lambda+1)^2\kappa^2 n) $. Since in real applications, the values of $ \kappa $ and $ \lambda $ are mostly much smaller than $ n $, we can assume that the algorithm runs in linear time with respect to the number of nodes.

\subsection{Constructing the optimal subpartition}
Now, we show that during the execution of Algorithm \ref{alg:main}, how  one can construct a subpartition $ \mathcal{A}\in \mathscr{C}_\kappa(T) $ with $ \phi_T(\mathcal{A})\leq \xi $ and $ |R(\mathcal{A},T)|\leq \lambda $ (if there exists).
Let $ \xi,\kappa $ and $ \lambda $ be fixed and for every vertex $ u\in V(T) $ and $ 1\leq k\leq \kappa $ and $ 0\leq l\leq \lambda $, if $ \mu_\xi(u,k,l)=1 $, then let $ \mathcal{A}_\mu=\mathcal{A}_\mu(u,k,l) $ be a $ k-$subpartition in $ \mathscr{C}_k(T_u)  $ such that $ \phi_T(\mathcal{A}_\mu)\leq \xi$ and $ |R(\mathcal{A}_\mu,T_u)|\leq l $. Also, if  $ \mu_\xi(u,k,l)=0 $, let $ \mathcal{A}_\mu=\mathcal{A}_\mu(u,k,l):= \emptyset $. Then, the subpartition $ \mathcal{A}_\mu(r,\kappa,\lambda) $ is what we are looking for. Also, let $ \mathcal{A}_\Gamma=\mathcal{A}_\Gamma(u,k,l) $ be a subpartition in $ \mathscr{C}_\xi(u,k,l) $ which minimizes \eqref{eq:gammadef}.

Now, let $ u $ be a vertex with children $ (u_1,\ldots,u_d) $. First, according to Algorithm~\ref{alg:mu} and  assuming that we have all the subpartitions $\mathcal{A}_\Gamma(u,k,l)$ and $ \mathcal{A}_\mu(u_i,k,l) $, we explain how to obtain $ \mathcal{A}_\mu(u,k,l) $. For this, throughout the execution of Algorithm~\ref{alg:mu}, in Line~\ref{ln:2-5}, if $ \Gamma_\xi(u,k,l)\leq \xi \omega(T_u)-c(e_u) $, then set $ \mathcal{A}_\mu(u,k,l)\isdef \mathcal{A}_\Gamma(u,k,l) $, otherwise set $ \mathcal{A}_\mu(u,k,l)\isdef \emptyset $. Also, in Line~\ref{ln:2-10}, if $ \mu_\xi(u_1,k,l)=1 $, then set $ \mathcal{U}(k,l)\isdef \mathcal{A}_\mu(u_1,k,l) $ and in Line~\ref{ln:2-18}, if $ U(i-1,k',l')= \mu_\xi(u_i,k-k',l-1-l')=1 $, then set $ \mathcal{U}(k,l)\isdef \mathcal{U}(k',l')\cup \mathcal{A}_\mu(u_i,k-k',l-1-l') $.  Finally, in Line~\ref{ln:2-27}, if $ U(d,k,l)= 1 $, then set $ \mathcal{A}_\mu(u,k,l)\isdef \mathcal{U}(k,l) $.

Next, according to \pref{alg:Gamma} and assuming that we have all the subpartitions $ \mathcal{A}_\mu(u_i,k,l) $ and $ \mathcal{A}_\Gamma(u_i,k,l) $, we explain how to obtain $ \mathcal{A}_\Gamma(u,k,l) $.  
First, throughout the execution of \pref{alg:Gamma}, in Line~\ref{ln:1-5}, if $\mu_{\xi}(u_i,k-1,l)=1 $ and $\varepsilon_i \leq \Gamma_\xi(u_i,k,l) $, then set $ \mathcal{X}(i,k,l)\isdef \mathcal{A}_\mu(u_i,k-1,l)\cup \{\{u\}\}$, otherwise let $ \mathcal{X}(i,k,l)$ be the subpartition obtained from $ \mathcal{A}_\Gamma(u_i,k,l) $ by adding the vertex $ u $ to the set containing $ u_1 $. 
Also, in Line~\ref{ln:1-13}, set $ \mathcal{Y}(k,l)\isdef \mathcal{X}(1,k,l) $. Next, in Line~\ref{alg1-l20}, if $Y(i-1,k',l')+X(i,k+1-k',l-l')\leq Y(i,k,l)$, then let $ \mathcal{Y}(k,l) $ be obtained from the disjoint union of $ \mathcal{Y}(k',l') $ and $ \mathcal{X}(i,k+1-k',l-l') $ by merging two sets containing the vertex $ u $.
Finally, in Line~\ref{ln:1-26}, set $ \mathcal{A}_\Gamma(u,k,l)\isdef \mathcal{Y}(k,l) $.


\section{Towards more extensions} \label{sec:extensions}

In this section, we show that our presented scheme can be generalized to solve the following more realizable problems  efficiently:

\begin{enumerate}

\item Solving CMSC problem on trees with potentials.

\item Solving CMSC problem on forests.

\item Solving the following semi-supervised problem: Given a weighted graph $G=(V,E,\omega,c)$ (not necessarily a  forest), two disjoint subsets $S_1,S_2\subseteq V$, rational number $ \xi $ and integers $ \kappa, \lambda $, such that the induced subgraph of $ G $ on $ V(G)\setminus S_1 $ is a forest. Does there exist a connected subpartition $ \mathcal{A}\in \mathscr{D}_\kappa(V) $ such that $ \phi_G(\mathcal{A})\leq \xi $, $ |R(\mathcal{A},G)|\leq \lambda $, $ S_1\subseteq R(\mathcal{A},G) $ and $ S_2 \cap R(\mathcal{A},G)=\emptyset $?
\end{enumerate}

In the following, we elaborate on the modifications that should be made to tackle the above settings. 
\begin{enumerate}
\item In the setting of trees with potentials, each vertex $v \in V(T)$ is endowed with a potential weight, say $p(v)$, which is a nonnegative number and the goal is to determine whether there exists a connected $k$-subpartition $\mathcal{A}=\{A_{{i}}\}^k_1 \in \mathscr{D}_k(V)$ such that
		$$\displaystyle{\phi_T(\mathcal{A})=\max_{1 \leq i \leq k}}\left\{ \phi_T(A_i)=\frac{ 
			c(\partial A_{{i}}) + p(A_i)}{\omega(A_{{i}})}\right\} \leq \xi$$ 
		and $|R(\mathcal{A},T)|\leq \lambda$. We can extend our method to solve this problem using Algorithm \ref{alg:main}. First, for each edge $e \in E(T)$, amend the definition of $ \varepsilon_\xi(e) $ in \eqref{eq:epsilon} as follows 
			$$\varepsilon_\xi(e)  \isdef \xi\,
 \omega(T_e) + c(e) - p(T_e).$$
Also, define the functions $ \mu_\xi $ and $ \Gamma_\xi $ analogously. Next, with a similar argument as in \pref{lem:mu}, one may prove that $ \mu_\xi(u,k,l)=1 $ if and only if either $ \Gamma_\xi(u,k,l) \leq \xi\, \omega(T_{u}) - c(e_u) - p(T_u)   $, or $ U(d,k,l)=1 $.
Moreover, \pref{lem:Gamma} is still valid. So, we should just change Line~\ref{ln:1-1} in \pref{alg:Gamma} and Line~\ref{ln:2-5} in \pref{alg:mu}, accordingly and then \pref{alg:main} works for the new setting. 

\item  Suppose that the forest $ F $ consists of $c$ disjoint trees $T_1,\ldots,T_c$ rooted at $r_1,\ldots,r_c$ respectively. Also, let $ \xi,\kappa, \lambda $ be fixed. First, using \pref{alg:main}, compute the value of $ \mu_\xi(r_i,k,l) $,  for every integers $ 0\leq l\leq \lambda $, $0\leq k \leq \kappa$ and $ 1\leq i\leq c $. Also,  define
\begin{equation*}
Z_\xi(1,k,l) = \mu_{\xi}(r_1,k,l).
\end{equation*}

The following recursion helps us to solve the problem on $ F $. For every $ 2\leq i\leq c $, define

\begin{equation*}
Z(i,k,l) \isdef \begin{cases}
1 & \text{if there exists } 0\leq k'\leq k \text{ and } 0\leq l'\leq l, \text{ s.t. }\\
& Z(i-1,k',l')= \mu_\xi(r_i,k-k',l-l')=1, \\
0 & \text{o.w.}
\end{cases}
\end{equation*}

Then, the solution to CMSC problem is yes if and only if $Z_\xi(c,\kappa,\lambda) = 1$. Furhermore, One may easily extend this recursion to solve the corresponding problem on forests with potentials.

\item In this setting some vertices should be or should not be in the residue set. The problem can be solved in the following steps:

\begin{itemize}
\item[-] First, for each vertex $v \in V(G)$, define a potential as follows
\[p(v)\isdef \sum_{e\in E(\{v\},S_1)} c(e). \]
 \item[-] Now, let $ F $ be a forest obtained from $ G $ by deleting all vertices in $S_1$. Also, let $ \lambda'\isdef \lambda-|S_1| $.
\item[-] If $ S_2 $ is empty, then the solution can be obtained by performing the method given in 2 on the forest $ F $ with the potential weight $ p $ and the numbers $ \xi,\kappa, \lambda' $. If $ S_2 $ is non-empty, we have to make the following additional modifications to handle the problem.
\end{itemize}

Suppose that $ T $ is a tree and $ S\subseteq V(T) $ is a subset of vertices. Also, numbers $ \xi,\kappa,\lambda $ are given. We are looking for a connected subpartition $ \mathcal{A}\in \mathscr{D}_\kappa(V)  $ such that $ \phi_T(\mathcal{A})\leq \xi $, $ |R(\mathcal{A},T)|\leq \lambda $ and $ S\cap R(\mathcal{A},T)=\emptyset $. Note that \pref{lem:Gamma} is still valid in this setting. However, in the computation of $\mu_\xi(u,k,l)$, for each $u\in V(T)$, in \pref{lem:mu}, if $u \in S$, then $ u $ is not allowed to be in the residue set. So, the value of $\mu_\xi(u,k,l)$ is equal to $1$ if and only if $ \Gamma_\xi(u,k,l) \leq \xi\, \omega(T_{u}) - c(e_u)$. Thus, with a similar proof as in \pref{lem:mu}, we can prove that 
\begin{equation*}
 \mu_\xi(u,k,l) =
 \begin{cases}
1 & \text{if $u \in S$ and  $  \Gamma_\xi(u,k,l) \leq \xi\, \omega(T_{u}) - c(e_u), $} \\
1 & \text{if $u \not\in S$ and  either $  \Gamma_\xi(u,k,l) \leq \xi \omega(T_{u}) - c(e_u)  $ or $ U(d,k,l)=1, $}\\
0 & \text{o.w.}
\end{cases}
\end{equation*}

Then, \pref{alg:mu} can be modified accordingly to compute the value of $ \mu_\xi(u,k,l) $.

\end{enumerate}

\section{Concluding remarks and future work}
In this paper, a multi-way sparsest cut problem has been investigated for weighted trees and it was shown that although the problem is $ NP- $complete for trees, it becomes tractable when the search space is confined to connected subdomains. One of the strengths of the method is that it has a control on the number of outliers and can manage semi-supervised settings when some data points are forced or forbidden to be outlier.
Besides the theoretical importance of the sparsest cut problem, when our method is applied to the minimum spanning tree, it can steer several applications in both unsupervised and semi-supervised clustering. 

One may also consider an analogous problem when we are seeking for a subpartition minimizing ``the average''  (instead of the maximum) of the edge expansions of the parts (e.g. as in \cite{shi2000normalized}). This objective function is more sensitive and exquisite and are more likely to produce high-quality clustering results. Nevertheless, the problem unfortunately turns out to be $ NP- $complete on trees even when the search space is restricted to connected subpartitions (or partitions) \cite{dam}. Finding a good approximation algorithm for this problem is an interesting and challenging task that can be the purpose of future work in this line of research.  \\

\textbf{Acknowledgment.} We would like to express our sincere thanks to Amir Daneshgar whose valuable comments were crucial in preparing and improving the present article.

%
\providecommand{\bysame}{\leavevmode\hbox to3em{\hrulefill}\thinspace}
\providecommand{\MR}{\relax\ifhmode\unskip\space\fi MR }
\providecommand{\MRhref}[2]{%
	\href{http://www.ams.org/mathscinet-getitem?mr=#1}{#2}
}
\providecommand{\href}[2]{#2}

\end{document}